\let\OLDthebibliography\thebibliography
\renewcommand\thebibliography[1]{
  \OLDthebibliography{#1}
  \setlength{\parskip}{0pt}
  \setlength{\itemsep}{0pt plus 0.3ex}
}
\newcommand{\rvoid}{|0\rangle}
\newtheorem{theorem}{Theorem}
\newtheorem{property}{Lemma}
\author[1,2,4]{Etienne Granet}
\author[1,2,3]{Jesper Lykke Jacobsen}
\affil[1]{Institut de Physique Th\'eorique, Paris Saclay, CEA, CNRS, 91191 Gif-sur-Yvette, France}
\affil[2]{Laboratoire de Physique de l'Ecole Normale Sup\'erieure, ENS, Universit\'e PSL, CNRS, Sorbonne Universit\'e, Universit\'e Paris-Diderot, Sorbonne Paris Cit\'e, Paris, France}
\affil[3]{Sorbonne Universit\'e, \'Ecole Normale Sup\'erieure, CNRS, Laboratoire de Physique (LPENS), 75005 Paris, France}
\affil[4]{The Rudolf Peierls Centre for Theoretical Physics, Oxford
University, Oxford OX1 3PU, UK}
\title{On zero-remainder conditions in the Bethe ansatz}
\date{}
\begin{document}

\maketitle

\begin{abstract}
We prove that physical solutions to the Heisenberg spin chain Bethe ansatz equations are exactly obtained by imposing two zero-remainder conditions. This bridges the gap between different criteria, yielding an alternative proof of a recently devised algorithm based on $QQ$ relations, and solving its minimality issue.
\end{abstract}

\subsection{Introduction}
The Bethe ansatz \cite{bethe} enables one to write an energy level of the periodic Heisenberg (XXX) spin chain on $L$ sites as
\begin{equation}
E=-\frac{1}{2}\sum_{i=1}^K\frac{1}{\lambda_i^2+1/4} \,,
\end{equation}
where the $K$ Bethe roots $\lambda_i$ satisfy the Bethe ansatz equations traditionally written in the following form
\begin{equation}
\label{eq:bexxz}
\left( \frac{\lambda_i+i/2}{\lambda_i-i/2}\right)^L=\prod_{j\neq i}\frac{\lambda_i-\lambda_j+i}{\lambda_i-\lambda_j-i} \,.
\end{equation}
To make this simply stated result precise however necessitates recalling some subtleties in the resolution of \eqref{eq:bexxz}.\\

Firstly, the ordering of the $\lambda_i$'s is irrelevant, and two solutions differing only by a permutation of the roots should be considered equal; secondly, in general a solution with two coinciding roots $\lambda_i=\lambda_j$ should be discarded \cite{bethe,slv}; thirdly, there are special solutions with $\lambda_1=i/2$ and $\lambda_2=-i/2$, called \textit{exact strings} \cite{bethe,esslerfine,noh,hagemans,fabricius,bax,goetze,bazhanov}, which must be sometimes discarded and sometimes not. These issues are particularly important for e.g. completeness of the Bethe ansatz, which has been widely studied \cite{bethe,kirillov,kirillov2,klumperzittartz,esslercomplete,esslerfine,juettner,tarasov3}, but also because these exceptional solutions play a noteworthy role in a variety of contexts, see e.g. \cite{wortis,ganahl,symmetrybreaking,arutyunov}. 

It is well known that the first two points are solved by inspection of the expression of the eigenvalue $T(\lambda)$ of the transfer matrix at spectral parameter $\lambda$. It satisfies the \textit{$TQ$ relation} \cite{Baxter8Vmodel1972,BaxterBook}
\begin{equation}
\label{eq:tq}
T(\lambda)Q(\lambda)=(\lambda-i/2)^LQ(\lambda+i)+(\lambda+i/2)^LQ(\lambda-i) \,,
\end{equation}
where
\begin{equation}
Q(\lambda)=\prod_{i=1}^K(\lambda-\lambda_i) \,.
\end{equation}
From general arguments $T(\lambda)$ has to be a polynomial in $\lambda$, and thus \eqref{eq:tq} gives a zero-remainder condition on the coefficients of $Q$, that permits then to solve for the $\lambda_i$'s, and that directly removes the non-physical solutions with coinciding roots.\\

However, the $TQ$ relation alone does not solve the third point. Indeed, any solution $(i/2,-i/2,\lambda_3,...,\lambda_K)$ where $\lambda_3,...,\lambda_K$ satisfy the Bethe equations \eqref{eq:bexxz} (for $\pm i/2$, they are automatically satisfied if both sides are multiplied by the denominators beforehand) does give a polynomial $T(\lambda)$, but the normalized Bethe state is then expressed in a singular way whose regularization depends on the way the roots $\lambda_i$ of the strings converge to $\pm i/2$. More precisely, denoting the $R$-matrix by
\begin{equation}
\label{defr}
R(\lambda)=\left(\begin{matrix}
\lambda+i/2&0&0&0\\
0&\lambda-i/2&i&0\\
0&i&\lambda-i/2&0\\
0&0&0&\lambda+i/2
\end{matrix} \right)\,,
\end{equation}
where the monodromy matrix and the transfer matrix read respectively, using the standard notations of the algebraic Bethe ansatz approach \cite{slv},
\begin{equation}
\left(\begin{matrix}
A&B\\
C&D
\end{matrix} \right)=R_{0L} \cdots R_{01}\,,\qquad t=A+D
\end{equation}
we have
\begin{align}
\label{unwanted}
t(\lambda)B(\lambda_1) \cdots B(\lambda_K)\rvoid&=T(\lambda)B(\lambda_1) \cdots B(\lambda_K)\rvoid\\
&-\sum_{i=1}^K \frac{\text{Res}(T(\lambda),\lambda_i)}{\lambda-\lambda_i}B(\lambda)B(\lambda_1)...\widehat{B}(\lambda_i) \cdots B(\lambda_K)\rvoid \,,
\end{align}
where the hat $\widehat{B}$ indicates that the corresponding factor is omitted in the product. Although the residues do vanish in case of strings, we have $B(\lambda_1)...B(\lambda_K)\rvoid=0$ \cite{slv,Siddharthan}, so that the normalized Bethe state is not necessarily an eigenvector. In fact, the solutions to the Bethe equations with exact strings sometimes do yield eigenvalues and eigenvectors of the Hamiltonian, and sometimes not.\\

In \cite{nepomechie,avdeev} a sufficient condition was found under which a Bethe vector can be built which is an eigenvector, by regularizing in a very particular way the roots $\pm i/2$. Another practical way of distinguishing physical from non-physical solutions is to examine the behaviour of the solutions in terms of an additional parameter, such as a twist \cite{volin,nepomechietwist}. But the most important recent advance on this question was an efficient algorithm \cite{marboe} found by Marboe and Volin to solve the $TQ$ relation while discarding automatically the non-physical solutions among those with an exact string, with only algebraic manipulations and zero-remainder conditions called $QQ$ relations.\\

The algorithm of \cite {marboe} relies on the remarkable result of \cite {mukhin} (following previous works \cite{tarasavo2,tarasov3,pronko,bazhanov}) that the Bethe ansatz is complete provided the Bethe equations are replaced by finite difference second order equations whose two solutions are polynomial, and it was claimed by the authors of \cite{marboe} that their algorithm implies that only cases where both solutions are polynomial are obtained. In view of the long standing study of exact string solutions and their role in the completeness of the Bethe ansatz, it is natural to ask how these exceptional solutions are related to the work \cite {mukhin}, that does not deal with them at all, and whose reasoning is very far from that of  \cite{nepomechie,volin,nepomechietwist,avdeev}. In particular it does not show that the extra eigenvalues obtained by this replacement correspond to the regularization of the singular exact strings solutions to the Bethe equations -- if they did not, it would precisely mean that the Bethe ansatz equations alone are incomplete.

Another unclear point in this construction is the minimality of the algorithm of \cite {marboe}. Indeed it  relies  on a large amount (proportional to $L$) of zero remainder conditions, whereas  it was conjectured therein that imposing a much smaller number of relations would suffice to lead to the same physical solutions. However, the authors of \cite{marboe} did not provide a proof of this conjecture, suggesting that a simpler or more natural proof is lacking, and that would solve the minimality issue of their algorithm.\\

In this paper, we answer these points and show that (i) for exact strings, having two polynomial solutions to the $TQ$ relations ensures that $T(\lambda)$ is an eigenvalue of the transfer matrix, hence showing that the exact strings solutions are indeed counted in \cite {mukhin}, reconciling their approach with the regularization of  singular exact strings solutions \cite{nepomechie,volin,nepomechietwist,avdeev}; and (ii) we show that the $TQ$ relation has to be supplemented with another zero-remainder $TQ$-like relation to yield all and only physical solutions, giving an elegant way of distinguishing physical and non physical solutions, and proving the algorithm of \cite {marboe} as well as  the minimality issue raised therein. \\

These results and methods established here for the periodic XXX chain will be used in a subsequent paper \cite{otherarticle} to generalize and prove $QQ$ relations for the anisotropic Heisenberg (XXZ) spin chain, and to extend the results to the case of open boundary conditions.

\subsection{Polynomiality of the other solution to the $TQ$ relation}
We thus consider Bethe roots $\Lambda=\{\lambda_1,...,\lambda_n\}$ solutions to the equations \eqref{eq:bexxz}. We will assume that all roots are different, $\lambda_i\neq\lambda_j$ if $i\neq j$ \cite{neposommese}. We denote $\bar{\Lambda}$ the set of $\lambda_i$'s such that there does not exist another $\lambda_j$ with $\lambda_i-\lambda_j=\pm i$, and $S$ the set of complex numbers $s$ (the 'center of strings') such that $s+i/2\in\Lambda$ and $s-i/2\in\Lambda$. We denote $\bar{Q}(\lambda)=\prod_{\lambda_k\in\bar{\Lambda}}(\lambda-\lambda_k)$. We will finally use the convenient notation $Q^*(\lambda)=\prod_{k} (\lambda-\lambda_k)$ if $\lambda\notin \Lambda$ and $Q^*(\lambda_p)=\prod_{k\neq p} (\lambda_p-\lambda_k)$ for $\lambda_p\in\Lambda$. 

Before addressing the main results, for sake of completeness we recall here the following known result

\begin{property}
\label{th:0}
We have $S=\varnothing$ or $S=\{0\}$.
\end{property}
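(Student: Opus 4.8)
The plan is to prove that every $s\in S$ is forced to equal $0$; this yields $S\subseteq\{0\}$, i.e.\ $S=\varnothing$ or $S=\{0\}$. The only input I need is that, for \emph{every} root $\mu\in\Lambda$ (including the singular ones $\pm i/2$), the Bethe equation holds in its polynomial, cleared-denominator form
\begin{equation}
\label{eq:cleared}
(\mu-i/2)^LQ(\mu+i)+(\mu+i/2)^LQ(\mu-i)=0\,.
\end{equation}
This is exactly the statement that the right-hand side of the $TQ$ relation \eqref{eq:tq} vanishes at $\mu$ so that $T$ be a polynomial, and it is the form the Bethe equations \eqref{eq:bexxz} take once both sides are multiplied by their denominators, so it is available even at $\mu=\pm i/2$.

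Now fix $s\in S$, so $s\pm i/2\in\Lambda$, and suppose for contradiction that $s\neq0$. First I would run \eqref{eq:cleared} at $\mu=s+i/2$: since $s-i/2\in\Lambda$ we have $Q(s-i/2)=0$, the second term drops, and \eqref{eq:cleared} collapses to $s^LQ(s+3i/2)=0$; as $s\neq0$ this forces $s+3i/2\in\Lambda$. The key step is to iterate this upward: assuming $s+i/2,\,s+3i/2,\,\dots,\,s+(2k-1)i/2$ all lie in $\Lambda$, I apply \eqref{eq:cleared} at $\mu=s+(2k-1)i/2$, whose shift $\mu-i=s+(2k-3)i/2$ is the previously established root (for $k=1$ this is the partner $s-i/2$), so $Q(\mu-i)=0$ and \eqref{eq:cleared} reduces to $\big(s+(k-1)i\big)^LQ\big(s+(2k+1)i/2\big)=0$. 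Hence at each step either the prefactor vanishes, i.e.\ $s=-(k-1)i$, or the next root $s+(2k+1)i/2$ enters $\Lambda$. Since $\Lambda$ is finite the chain must stop, so the prefactor vanishes at some step, giving $s\in\{0,-i,-2i,\dots\}$.

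Running the mirror argument downward from $s-i/2$, using at each step that $\mu+i$ is the previously produced root, yields $\big(s-(k-1)i\big)^LQ\big(s-(2k+1)i/2\big)=0$ and hence, by the same finiteness argument, $s\in\{0,i,2i,\dots\}$. The two constraints $s\in\{0,-i,-2i,\dots\}$ and $s\in\{0,i,2i,\dots\}$ intersect only in $s=0$, contradicting $s\neq0$. Therefore $s=0$, completing the proof.

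The computations are all direct substitutions into \eqref{eq:cleared}, so I expect no real difficulty there; the two points requiring care are the justification of \eqref{eq:cleared} at the singular roots $\pm i/2$ (where the naive Bethe equation is of the form $0/0$, and one must argue through the cleared form, equivalently through polynomiality of $T$), and the bookkeeping of the induction so that the termination --- that the prefactor $(s\mp(k-1)i)^L$ is forced to vanish for some finite $k$ because $\#\Lambda<\infty$ --- is manifest. That induction, and in particular the treatment of its base case where the annihilating neighbour is the companion string root rather than a chain member, is the main thing to set up cleanly.
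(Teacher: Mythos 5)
Your proof is correct and takes essentially the same route as the paper's: both use the cleared (polynomial) form of the Bethe equations at the string roots to build a ladder of roots upward from $s+i/2$ and downward from $s-i/2$, with finiteness of $\Lambda$ forcing $s\in\{0,-i,-2i,\dots\}$ and $s\in\{0,i,2i,\dots\}$ respectively, hence $s=0$. The only cosmetic difference is that you encode the cleared equations through $Q$ and polynomiality of $T$ in \eqref{eq:tq}, whereas the paper works directly with the product form of \eqref{eq:bexxz} multiplied out by its denominators.
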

\begin{proof}
Assume that there are two roots such that $\lambda_{i_1}-\lambda_{i_2}=i$. Denote $s=\lambda_{i_1}-i/2$. Then from \eqref{eq:bexxz} with $k=i_1$, either $\lambda_{i_1}=i/2$, in which case $s=0$, or there exists another $\lambda_{i_3}$ such that $\lambda_{i_1}-\lambda_{i_3}=-i$. In the latter case, the same argument can be then repeated with $k=i_3$, so that $s=ni$ with $n$ a negative or zero integer, since there is a finite number of roots. On the other hand, \eqref{eq:bexxz} for $k=i_2$ implies that either $\lambda_{i_2}=-i/2$, in which case $s=0$, or there exists another $\lambda_{i_4}$ such that $\lambda_{i_2}-\lambda_{i_4}=i$. The same argument can be then repeated with $k=i_4$, implying that $s=ni$ with $n$ a positive or zero integer. Thus $s=0$.
\end{proof}

Let us start with the following property, that generalizes  \cite{pronko} to the exact strings case.

\begin{property}
\label{th:1}
There exist a polynomial $P_0(\lambda)$ and complex numbers $\alpha_s$ for $s\in S$ such that
\begin{equation}
\label{eq:pq}
P(\lambda+i/2)Q(\lambda-i/2)-P(\lambda-i/2)Q(\lambda+i/2)=\lambda^L
\end{equation}
with
\begin{equation}
\label{eq:pform}
P(\lambda)=P_0(\lambda)+Q(\lambda)\sum_{s\in S}\alpha_s \psi (-i(\lambda-s)+1/2) \,,
\end{equation}
where $\psi(x)$ is the digamma function. Moreover, $\alpha_0=0$ if and only if the following additional Bethe equation is satisfied
\begin{equation}
\label{eq:genb}
(-1)^L=\prod_{\lambda_k\neq \pm i/2}\frac{\lambda_k+i/2}{\lambda_k-i/2}\cdot \frac{\lambda_k+3i/2}{\lambda_k-3i/2} \,.
\end{equation}
\end{property}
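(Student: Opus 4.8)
The plan is to read \eqref{eq:pq} as a first-order inhomogeneous difference equation for the ratio $R(\lambda)=P(\lambda)/Q(\lambda)$. Dividing \eqref{eq:pq} by $Q(\lambda+i/2)Q(\lambda-i/2)$ turns it into $R(\lambda+i/2)-R(\lambda-i/2)=F(\lambda)$, with $F(\lambda)=\lambda^L/[Q(\lambda+i/2)Q(\lambda-i/2)]$ a rational function whose poles sit at the shifted roots $\lambda_k\pm i/2$. The elementary building block is the digamma function: a short computation from $\psi(z+1)-\psi(z)=1/z$ shows that $\Phi_a(\lambda):=\psi(-i(\lambda-a)+1/2)$ obeys $\Phi_a(\lambda+i/2)-\Phi_a(\lambda-i/2)=i/(\lambda-a)$, so that to each simple pole $r_a/(\lambda-a)$ of $F$ I can attach the particular contribution $(r_a/i)\Phi_a$ to $R$, while the polynomial part of $F$ is matched by a polynomial antidifference. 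First I would therefore perform the partial-fraction decomposition of $F$ and assemble a particular solution as such a digamma sum plus a polynomial.

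The heart of the argument is to show that almost all of these digammas collapse to rational functions, leaving a digamma only at the string centers. The mechanism is a telescoping identity: since $\Phi_a-\Phi_{a+i}$ is rational (again from $\psi(z+1)-\psi(z)=1/z$), two poles separated by $i$ leave a surviving digamma only through the sum of their residues. For a root $\lambda_k\in\bar\Lambda$ the relevant poles are exactly the pair $\lambda_k-i/2$ and $\lambda_k+i/2$, separated by $i$, and the sum of their residues is $[(\lambda_k-i/2)^LQ(\lambda_k+i)+(\lambda_k+i/2)^LQ(\lambda_k-i)]/[Q'(\lambda_k)Q(\lambda_k-i)Q(\lambda_k+i)]$, whose numerator vanishes precisely by the Bethe equation \eqref{eq:bexxz} at $\lambda_k$. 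Hence every non-string pair telescopes completely and contributes only to a rational function $\rho$; one checks that all residual poles introduced by telescoping sit at genuine roots $\lambda_k$, so that $P_0:=Q\rho$ is a genuine polynomial. By Lemma~\ref{th:0} the only possible string center is $s=0$, carried by the roots $\pm i/2$, whose poles lie at $0$ (a double pole, annihilated by the order-$L$ zero of $\lambda^L$) and at $\pm i$; these form a chain $\{-i,0,i\}$ whose telescoping is obstructed, leaving a single surviving digamma which I would write at the center as $\alpha_0\Phi_0$, reproducing the form \eqref{eq:pform} with $P=P_0+Q\sum_{s\in S}\alpha_s\Phi_s$.

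For the final claim I would track the coefficient of $\Phi_0$. Writing $\Phi_{\pm i}=\Phi_0+(\text{rational})$ via the telescoping identity, the two poles at $\pm i$ combine into $\alpha_0\Phi_0$ with $\alpha_0=(r_i+r_{-i})/i$, where $r_{\pm i}$ are the residues of $F$ at $\pm i$. Evaluating these with $Q(\lambda)=(\lambda^2+1/4)\bar Q(\lambda)$ yields, up to a common nonzero factor, $r_i\propto 1/[\bar Q(i/2)\bar Q(3i/2)]$ and $r_{-i}\propto (-1)^L/[\bar Q(-i/2)\bar Q(-3i/2)]$, so $\alpha_0=0$ is equivalent to $(-1)^L\bar Q(i/2)\bar Q(3i/2)=\bar Q(-i/2)\bar Q(-3i/2)$. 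Rewriting $\bar Q(\pm i/2)$ and $\bar Q(\pm 3i/2)$ as products over $\lambda_k\neq\pm i/2$ (the factors $(-1)^{|\bar\Lambda|}$ cancelling between numerator and denominator) reproduces exactly \eqref{eq:genb}.

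The step I expect to be the main obstacle is the bookkeeping of the telescoping at the string: one must handle the cancelled double pole at the origin and verify that the chain $\{-i,0,i\}$ leaves one and only one digamma, with the clean coefficient $(r_i+r_{-i})/i$, while simultaneously confirming that all rational remainders have poles only at genuine roots so that $P_0$ stays polynomial and the behaviour at infinity is polynomial. The residue evaluation at $\pm i$ must also be carried out carefully enough to exhibit the common prefactor and to match signs with \eqref{eq:genb}.
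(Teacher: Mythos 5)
Your proof is correct, and although it shares the paper's overall strategy---viewing \eqref{eq:pq} as the first-order difference equation for $P/Q$ and solving it by partial fractions plus digamma antidifferences---its execution is genuinely different and more self-contained. The paper never decomposes pole by pole: it groups the partial fractions as $\pi+q_-/Q(\lambda-i/2)+q_+/Q(\lambda+i/2)+\sum_s c_s/(\lambda-s)^2$, and then leans on the transfer matrix, via the identity $T(\lambda)/[Q(\lambda+i)Q(\lambda-i)]=R(\lambda+i/2)+R(\lambda-i/2)$ and the polynomiality of $T$, both to kill the double poles $c_s$ and to force the pairing $q_+(\lambda_j-i/2)+q_-(\lambda_j+i/2)=0$ at $\lambda_j\in\bar{\Lambda}$; digammas then appear only in the string part. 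You bypass $T(\lambda)$ entirely: the would-be double pole at the string center $0$ is annihilated directly by $\lambda^L$, every remaining simple pole gets a digamma, and each pair $\{\lambda_k-i/2,\lambda_k+i/2\}$ with $\lambda_k\in\bar{\Lambda}$ collapses because the residue sum is proportional to $(\lambda_k-i/2)^LQ(\lambda_k+i)+(\lambda_k+i/2)^LQ(\lambda_k-i)=0$, the polynomial form of \eqref{eq:bexxz}---which is exactly the fact the paper encodes as polynomiality of $T$. Your telescoping remainders do land only on roots of $Q$: from $\Phi_a-\Phi_{a+i}=i/(\lambda-a-i/2)$ the residual poles sit at $\lambda_k$ for the non-string pairs and at $\mp i/2$ for the chain $\{-i,0,i\}$, so $P_0$ is indeed polynomial. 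The contrast is sharpest in the ``moreover'' part: the paper evaluates \eqref{eq:tpq} at $s-i/2$ and plays it against the $TQ$ relation and \eqref{eq:pi/2}, whereas you read off $\alpha_0=(r_i+r_{-i})/i$ and compute two residues, which is shorter and makes the origin of \eqref{eq:genb} transparent. One caution on your residue bookkeeping: with $Q(\lambda)=(\lambda^2+\tfrac14)\bar{Q}(\lambda)$ one gets $r_i=-\tfrac{i^{L-1}}{2}[\bar{Q}(i/2)\bar{Q}(3i/2)]^{-1}$ and $r_{-i}=+\tfrac{i^{L-1}}{2}(-1)^L[\bar{Q}(-i/2)\bar{Q}(-3i/2)]^{-1}$, so the two residues carry a relative minus sign that your ``common nonzero factor'' phrasing glosses over; with that sign included, $r_i+r_{-i}=0$ yields exactly your displayed condition $(-1)^L\bar{Q}(i/2)\bar{Q}(3i/2)=\bar{Q}(-i/2)\bar{Q}(-3i/2)$, which is precisely \eqref{eq:genb}, so the slip is in the prose, not the conclusion (and you flagged it yourself). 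In sum, your route buys complete independence from the transfer matrix and a two-line derivation of the string condition, while the paper's route buys the intermediate identity \eqref{eq:tpq} tying $P$ to the eigenvalue $T$, a structure it mirrors again in Lemma \ref{th:2} and Theorem \ref{th:4}.
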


%
%

\begin{proof}

It is directly inspired by \cite{pronko}, where the authors (implicitly) treated the case $S=\varnothing$.

Denote
\begin{equation}
\label{eq:r0}
R(\lambda)=\frac{\lambda^L}{Q(\lambda+i/2)Q(\lambda-i/2)} \,.
\end{equation}
We have
\begin{equation}
\frac{T(\lambda)}{Q(\lambda+i)Q(\lambda-i)}=R(\lambda+i/2)+R(\lambda-i/2) \,.
\end{equation}

Since each $s\in S$ appears twice in the denominator in $R(\lambda)$, we can decompose
\begin{equation}
\label{eq:r}
R(\lambda)=\pi(\lambda)+\frac{q_-(\lambda)}{Q(\lambda-i/2)}+\frac{q_+(\lambda)}{Q(\lambda+i/2)}+\sum_{s\in S}\frac{c_{s}}{(\lambda-s)^2}
\end{equation}
with $\pi(\lambda), q_\pm(\lambda)$ polynomials of degree less than or equal to $n-1$ (since a term of order $n$ in the numerator could be reabsorbed in the constant term of $\pi(\lambda)$), and $c_s$ complex numbers. From this one gets

\begin{equation}
\label{eq:t}
\begin{aligned}
\frac{T(\lambda)}{Q(\lambda+i)Q(\lambda-i)}= & \ \pi(\lambda-i/2)+\pi(\lambda+i/2)\\
&+\frac{q_-(\lambda-i/2)}{Q(\lambda-i)}+\frac{q_+(\lambda+i/2)}{Q(\lambda+i)}+\frac{q_+(\lambda-i/2)+q_-(\lambda+i/2)}{Q(\lambda)}\\
&+\sum_{s\in S} \left( \frac{c_{s}}{(\lambda-s+i/2)^2}+\frac{c_{s}}{(\lambda-s-i/2)^2} \right) \,.
\end{aligned}
\end{equation}
Multiplying by $(\lambda-s+i/2)^2$ and sending $\lambda\to s-i/2$, since there is no double pole in $s-i/2$ on the left-hand side, one gets $c_s=0$.

For $\lambda_j\in\bar{\Lambda}$, multiplying by $(\lambda-\lambda_j)$ and taking $\lambda\to\lambda_j$ yields
\begin{equation}
q_+(\lambda_j-i/2)+q_-(\lambda_j+i/2)=0 \,,
\end{equation}
meaning that there exists a polynomial $\sigma$ such that
\begin{equation}
q_+(\lambda-i/2)+q_-(\lambda+i/2)=\bar{Q}(\lambda)\sigma(\lambda) \,,
\end{equation}
and thus a polynomial $q(\lambda)$ such that
\begin{equation}
q_\pm(\lambda)=\pm q(\lambda\pm i/2)+\frac{1}{2}\bar{Q}(\lambda\pm i/2)\sigma(\lambda\pm i/2)
\end{equation}
(for example, take $q(\lambda)=q_+(\lambda-i/2)-\tfrac{1}{2}\bar{Q}(\lambda)\sigma(\lambda)$). Thus
\begin{equation}
\begin{aligned}
R(\lambda)= & \ \pi(\lambda)-\frac{q(\lambda-i/2)}{Q(\lambda-i/2)}+\frac{q(\lambda+i/2)}{Q(\lambda+i/2)} \\
&+\frac{1}{2}\left(\frac{\sigma(\lambda-i/2)}{\prod_{s\in S}(\lambda-s)(\lambda-i-s)}+\frac{\sigma(\lambda+i/2)}{\prod_{s\in S}(\lambda+i-s)(\lambda-s)} \right) \,.
\end{aligned}
\end{equation}
As any polynomial, $\pi(\lambda)$ can be decomposed as
\begin{equation}
\pi(\lambda)=\rho(\lambda+i/2)-\rho(\lambda-i/2)
\end{equation}
with $\rho(\lambda)$ a polynomial, unique up to an additive constant. Denote now
\begin{equation}
U(\lambda)=\frac{1}{2}\left(\frac{\sigma(\lambda-i/2)}{\prod_{s\in S}(\lambda-s)(\lambda-i-s)}+\frac{\sigma(\lambda+i/2)}{\prod_{s\in S}(\lambda+i-s)(\lambda-s)} \right) \,.
\end{equation}
It can be decomposed as
\begin{equation}
U(\lambda)=\sum_{s\in S} \left( \frac{a_{s}}{\lambda -s}+\frac{b_s^+}{\lambda-(s+i)}+\frac{b_s^-}{\lambda-(s-i)} \right)
\end{equation}
with $a_s,b_s^+,b_s^-$ constants.
Using the property of the digamma function $\psi(x)$,
\begin{equation}
\psi(x+1)-\psi(x)=\frac{1}{x} \,,
\end{equation}
one can rewrite it as
\begin{equation}
U(\lambda)=V(\lambda+i/2)-V(\lambda-i/2) \,,
\end{equation}
where
\begin{equation}
\label{eq:Vdecomp}
V(\lambda)=\sum_{s\in S} \left( -i  (a_s+b_s^++b_s^-)\psi(-i(\lambda-s)+1/2)+\frac{b_s^-}{\lambda -(s-i/2)}-\frac{b_s^+}{\lambda -(s+i/2)} \right) \,.
\end{equation}
Therefore
\begin{equation}
R(\lambda)=\frac{P(\lambda+i/2)}{Q(\lambda+i/2)}-\frac{P(\lambda-i/2)}{Q(\lambda-i/2)}
\end{equation}
with
\begin{equation}
\label{eq:PrhoQ}
P(\lambda)=\rho(\lambda)Q(\lambda)+q(\lambda)+Q(\lambda)V(\lambda) \,.
\end{equation}
Note that since $s\pm i/2$ is a root of $Q(\lambda)$, $P$ is a polynomial if and only if $a_s+b_s^++b_s^-=0$ for all $s\in S$. Recalling \eqref{eq:r0}, one gets
\begin{equation}
\label{eq:pq3}
P(\lambda+i/2)Q(\lambda-i/2)-P(\lambda-i/2)Q(\lambda+i/2)=\lambda^L \,,
\end{equation}
as stated in the theorem. Moreover, $P(\lambda)$ takes the form \eqref{eq:pform} by virtue of \eqref{eq:PrhoQ} and \eqref{eq:Vdecomp}.

To show the second part of the lemma, we proceed as follows. Replacing the $(\lambda\pm i/2)^L$ in the $TQ$ relation \eqref{eq:tq} by relation \eqref{eq:pq3}, one gets
\begin{equation}
\label{eq:tpq}
T(\lambda)=P(\lambda+i)Q(\lambda-i)-P(\lambda-i)Q(\lambda+i)
\end{equation}
Evaluating this relation in $s-i/2$ yields, using the form \eqref{eq:PrhoQ} and the fact that the digamma function has a pole at each nonpositive integer with residue $-1$
\begin{equation}
\label{eq:str2}
T(s-i/2)=P(s+i/2)Q(s-3i/2)+(a_s+b_s^++b_s^-) Q^*(s+i/2)Q(s-3i/2)
\end{equation}
Using now the $TQ$ relation:
\begin{equation}
\label{eq:str1}
T(s-i/2)=\frac{Q^*(s+i/2)}{Q^*(s-i/2)}(s-i)^L+\frac{Q(s-3i/2)}{Q^*(s-i/2)}s^L
\end{equation}
and relation \eqref{eq:pq3} for $\lambda=s+i$,
\begin{equation}
\label{eq:pi/2}
P(s+i/2)=-\frac{(s+i)^L}{Q(s+3i/2)} \,,
\end{equation}
together with the fact that $s=0$ is the only possible string center, one gets from \eqref{eq:str2} and \eqref{eq:str1} that $a_s+b_s^++b_s^-=0$ if and only if
\begin{equation}
\label{eq:genb1}
(-1)^L=\prod_{\lambda_k\neq \pm i/2}\frac{\lambda_k+i/2}{\lambda_k-i/2}\cdot \frac{\lambda_k+3i/2}{\lambda_k-3i/2} \,,
\end{equation}
which concludes the proof.
\end{proof}

\subsection{Polynomiality of $P(\lambda)$ and constructability of the Bethe state}
We remark that \eqref{eq:genb1} is exactly the condition found in \cite{nepomechie,avdeev} for having a physical solution of the Bethe equations. However, \cite{nepomechie,avdeev} only proves that \eqref{eq:genb1} \textit{implies} the physicality of the solution. The purpose of Lemma \ref{th:2} below is to prove the \textit{equivalence} between polynomiality of $P(\lambda)$ and physicality of the solution.

Let us briefly explain our reasoning. In presence of exact strings, the residues in \eqref{unwanted} still vanish as in the case of non-singular Bethe roots; however, the Bethe state $B(\lambda_1) \cdots B(\lambda_n)|0\rangle$ vanishes as well (see \cite{slv,Siddharthan} and Lemma \ref{th:1ter}), and imposing the $TQ$ relation \eqref{eq:tq} alone is then non-conclusive. We want to show that one can find a regularization such that the residues in \eqref{unwanted} vanish faster than $B(\lambda_1^\epsilon) \cdots B(\lambda_n^\epsilon)|0\rangle$ when $\epsilon\to 0$, if and only if $P$ is a polynomial. To that end, we need to understand how fast the Bethe state actually vanishes when $\epsilon\to 0$, which is the purpose of Lemma \ref{th:1ter} (that is in fact needed in \cite{nepomechie} for their reasoning to be conclusive).\\

We should also mention that although the eigenvector constructed within the algebraic Bethe ansatz with the roots of $P$  'beyond the equator' (that has no exact strings) vanishes \cite {faddeevtakh}, it has been observed in the coordinate Bethe ansatz that one can build a non vanishing Bethe vector beyond the equator after some modifications \cite {bax}, and also more recently ideas have been proposed to build eigenstates using the beyond-the-equator Bethe roots \cite{gromovnewconstruction}. These are very elegant ways of building the eigenvector in case of two polynomial solutions to the $TQ$ relations; however, it does not forbid to imagine that the usual Bethe vector could be regularized if $P$ were not a polynomial, whereas Lemma \ref{th:2} does.

\begin{property}
\label{th:1ter}
Let $\lambda_1^\epsilon$ and $\lambda_2^\epsilon$ be such that $\lambda_{1,2}^\epsilon=\pm i/2+O(\epsilon)$ when $\epsilon\to 0$. Then
\begin{equation}
B(\lambda_1^\epsilon)B(\lambda_2^\epsilon)|0\rangle=\begin{cases}
O(\epsilon^L) & \mbox{if } \lambda_1^\epsilon-\lambda_2^\epsilon-i=O(\epsilon^L) \,, \\
O(\lambda_1^\epsilon-\lambda_2^\epsilon-i) & \text{otherwise} \,. \\
\end{cases}
\end{equation}
\end{property}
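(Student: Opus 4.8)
The plan is to compute the two-magnon coordinate wavefunction explicitly and then to track the competition between the string deviation $\eta:=\lambda_1^\epsilon-\lambda_2^\epsilon-i$ and the small vacuum factors as $\epsilon\to0$. Writing $B(\lambda_1^\epsilon)B(\lambda_2^\epsilon)|0\rangle=\sum_{n_1<n_2}\Psi(n_1,n_2)|n_1,n_2\rangle$, I would first record the one-magnon amplitude $B(\lambda)|0\rangle=\sum_m\phi(m;\lambda)|m\rangle$ with $\phi(m;\lambda)=i(\lambda-i/2)^{m-1}(\lambda+i/2)^{L-m}$ for the $R$-matrix \eqref{defr}, and then obtain $\Psi$ by acting with $B(\lambda_1^\epsilon)$ on each $|m\rangle$. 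Propagating the auxiliary line of $\lambda_1^\epsilon$ through the chain, exactly two mechanisms contribute: a \emph{direct} one, in which a fresh magnon is created while the auxiliary line passes the existing magnon without toggling it (giving a term $\Psi_A$), and an \emph{exchange} one, in which the line additionally hops through the existing magnon. By spin conservation the number of auxiliary toggles must be $1$ or $3$, and in the three-toggle case the toggle on the pre-existing magnon must be the middle one; this forces that magnon to lie strictly \emph{between} the two final sites, producing a geometric sum over its position (giving a term $\Psi_B$). The output is a closed form $\Psi=\Psi_A+\Psi_B$.

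Next I would introduce the deviations $u:=\lambda_1^\epsilon-i/2=O(\epsilon)$ and $v:=\lambda_2^\epsilon+i/2=O(\epsilon)$, so that $\eta=u-v$, together with the variables $w_j=(\lambda_j-i/2)/(\lambda_j+i/2)$ and the clean identity $w_1-w_2=(i\eta-1)/(\alpha_1\alpha_2)$, where $\alpha_j=\lambda_j+i/2$. With $\alpha_1,\ \delta_2:=\lambda_2-i/2=O(1)$ and $\delta_1:=\lambda_1-i/2=u,\ \alpha_2=v=O(\epsilon)$, a direct count shows that for a given separation $p:=n_2-n_1$ both $\Psi_A$ and $\Psi_B$ are individually of order $\epsilon^{L-p}$, but their leading parts partly cancel. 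Performing this cancellation with the identity above, I expect to reach, for each component, a decomposition $\Psi(n_1,n_2)=c(n_1,n_2)\,\eta+r(n_1,n_2)$ in which the $\eta$-linear coefficient scales as $u^{n_1-1}v^{L-n_2}=O(\epsilon^{L-1-p})$, and the floor $r(n_1,n_2)=\Psi|_{\eta=0}$, evaluated on the symmetric locus $u=v$, reduces after a second cancellation to $r(n_1,n_2)=-2\,\alpha_1^{L-n_2}\delta_2^{n_1-1}\,u^{L-1+p}=O(\epsilon^{L-1+p})$, the survival of only the combination $\alpha_1+\delta_2=2u$ being what kills the otherwise-leading $O(\epsilon^{L-1-p})$ piece.

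To conclude I would maximise over components. The $\eta$-linear part is largest for the maximally separated configuration $(n_1,n_2)=(1,L)$, where it is $\Theta(\eta)$, while the floor is largest for adjacent roots $p=1$, where it is $\Theta(\epsilon^L)$; since $\epsilon^{L-1-p}\le1$ and $\epsilon^{L-1+p}\le\epsilon^{L}$ for all admissible $p$, every component satisfies $|\Psi(n_1,n_2)|\le C\,(|\eta|+\epsilon^L)$. Hence the state has norm of order $\max(|\eta|,\epsilon^L)$, which is exactly the stated dichotomy: $O(\epsilon^L)$ when $\eta=O(\epsilon^L)$ and $O(\eta)$ otherwise.

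The main obstacle is precisely this cancellation, and it cannot be bypassed by a triangle-inequality estimate: for the spread-out configuration $(1,L)$ the two pieces $\Psi_A$ and $\Psi_B$ are each as large as $O(\epsilon)$, far bigger than the claimed $O(\eta)$ whenever $\eta$ is anomalously small, so $|\Psi|\le|\Psi_A|+|\Psi_B|$ gives nothing. All of the content lies in showing that $\Psi_A+\Psi_B$ collapses to a term strictly proportional to $\eta$ plus a genuinely smaller floor, and in doing so uniformly in $(n_1,n_2)$; the identity $w_1-w_2=(i\eta-1)/(\alpha_1\alpha_2)$ is what makes both cancellations transparent and cleanly isolates the two scales $\eta$ and $\epsilon^L$.
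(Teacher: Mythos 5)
Your proposal follows essentially the same route as the paper's own proof: both expand $B(\lambda_1^\epsilon)B(\lambda_2^\epsilon)|0\rangle$ explicitly in the coordinate basis, perform the sum over the intermediate (exchange) position, and exhibit the cancellation of the naively leading $O(\epsilon^{L-p})$ pieces, leaving an $O(\eta)$ part plus an $O(\epsilon^L)$ floor — indeed your claimed floor $-2\,\alpha_1^{L-n_2}\delta_2^{n_1-1}u^{L-1+p}$ coincides exactly with the paper's closed form $-2\sum_{q<k}\epsilon^{L+k-q-1}(i+\epsilon)^{L-k}(-i+\epsilon)^{q-1}\sigma_q^-\sigma_k^-|0\rangle$ at equal deviations. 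Your per-component tracking of the two scales in the separation $p$ is somewhat more quantitative than the paper's terse treatment of the deviation term ("this will bring an additional term that is at least $O(\eta)$"), but the underlying argument is the same.
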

\begin{proof}
From \eqref{defr} we have (see e.g. \cite{slv})
\begin{equation}
B(\lambda)\rvoid=\sum_{k=1}^L(\lambda+i/2)^{L-k}(\lambda-i/2)^{k-1}i\sigma_k^-\rvoid \,,
\end{equation}
where $\sigma_k^-$ is the Pauli matrix $\left(\begin{smallmatrix}
0&0\\1&0
\end{smallmatrix}\right)$ acting at site $k$. An efficient way to obtain this expression is to apply each $R$ matrix on the quantum space vector $\left(\begin{smallmatrix}1\\0\end{smallmatrix} \right)$ in \eqref{defr} before taking the tensor products. Thus when calculating the monodromy matrix we take a (regular) matrix product of
\begin{equation}
 \left(\begin{matrix}
 \left(\begin{matrix}\lambda+i/2\\0\end{matrix} \right)&\left(\begin{matrix}0\\i\end{matrix} \right)\\
 \left(\begin{matrix}0\\0\end{matrix} \right) & \left(\begin{matrix}\lambda-i/2\\0\end{matrix} \right)
 \end{matrix} \right)
\end{equation}
but whose coefficients have to be tensorized at each site. Similarly
\begin{equation}
\begin{aligned}
B(\lambda)\sigma_k^-\rvoid=&\sum_{q<k} (\lambda+i/2)^{L-q-1}(\lambda-i/2)^{q}i\sigma_q^-\sigma_k^-\rvoid \\
&+\sum_{j>k}(\lambda+i/2)^{L-j+1}(\lambda-i/2)^{j-2}i\sigma_j^-\sigma_k^-\rvoid\\
&-\sum_{q<k<j}(\lambda+i/2)^{L-j+k-q-1}(\lambda-i/2)^{j+q-2-k}i\sigma_q^-\sigma_j^-\rvoid \,.
\end{aligned}
\end{equation}
 Hence
\begin{equation}
\begin{aligned}
\label{2b}
B(\lambda)B(\mu)\rvoid=&\sum_{q<k}\Big(-(\lambda+i/2)^{L-q-1}(\lambda-i/2)^{q}(\mu-i/2)^{k-1}(\mu+i/2)^{L-k} \\
&-(\lambda+i/2)^{L-k+1}(\lambda-i/2)^{k-2}(\mu-i/2)^{q-1}(\mu+i/2)^{L-q}\\
&+\sum_{q<p<k}(\lambda+i/2)^{L-k+p-q-1}(\lambda-i/2)^{k+q-2-p}(\mu-i/2)^{p-1}(\mu+i/2)^{L-p} \Big)\sigma_q^-\sigma_k^-\rvoid \,. \\
\end{aligned}
\end{equation}
By computing the power sum over $p$, after a bit of rearangement one gets
\begin{equation}
B(i/2+\epsilon)B(-i/2+\epsilon)\rvoid=-2\sum_{q<k}\epsilon^{L+k-q-1}(i+\epsilon)^{L-k}(-i+\epsilon)^{q-1}\sigma_q^-\sigma_k^-\rvoid \,,
\end{equation}
which is $O(\epsilon^L)$.\\

For $\lambda=i/2+\epsilon$ and $\mu=-i/2+\epsilon+\eta(\epsilon)$ with $\eta(\epsilon)=O(\epsilon)$, it is clear from \eqref{2b} that this will bring an additional term that is at least $O(\eta)=O(\lambda_1^\epsilon-\lambda_2^\epsilon-i)$, which concludes the proof.
\end{proof}

\begin{property}
\label{th:2}
Let $\{\lambda_1,\ldots,\lambda_n\}$ be a solution to the Bethe ansatz equations. There exists a function $\epsilon\mapsto \{\lambda_1^\epsilon,...,\lambda_n^\epsilon\}$ with $ \underset{\epsilon\to 0}{\lim}\, \lambda_j^\epsilon =\lambda_j$ and $\lambda_k^\epsilon-\lambda_p^\epsilon\neq \pm i$ such that
\begin{equation}
\label{eq:lim}
 \underset{\epsilon\to 0}{\lim}\, \frac{ B(\lambda_1^\epsilon) \cdots B(\lambda_n^\epsilon)|0\rangle}{|| B(\lambda_1^\epsilon) \cdots B(\lambda_n^\epsilon)|0\rangle||}
\end{equation}
exists and is an eigenvector of the transfer matrix, if and only if the function $P(\lambda)$ of Lemma \ref{th:1} is a polynomial.
\end{property}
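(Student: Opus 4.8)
The plan is to convert the eigenvector condition into the vanishing of the two residues of \eqref{unwanted} that survive after regularizing the exact string, and to match that with $\alpha_0=0$. By Lemma~\ref{th:0} the only possible string center is $s=0$. If $S=\varnothing$ there is no string, the Bethe state is already an eigenvector and $P=P_0$ is a polynomial, so the statement is trivial; otherwise $\Lambda=\{i/2,-i/2,\lambda_3,\dots,\lambda_n\}$ with $\lambda_3,\dots,\lambda_n\in\bar\Lambda$. In this case the form \eqref{eq:pform} together with the second half of Lemma~\ref{th:1} already identifies ``$P$ is a polynomial'' with $\alpha_0=0$, i.e.\ with equation \eqref{eq:genb}; it therefore suffices to prove that a regularization making \eqref{eq:lim} an eigenvector exists if and only if \eqref{eq:genb} holds.

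First I would regularize by $\lambda_1^\epsilon=i/2+u$, $\lambda_2^\epsilon=-i/2+v$ with $u,v\to0$ and $\eta:=\lambda_1^\epsilon-\lambda_2^\epsilon-i=u-v\neq0$, and fix the regular roots $\lambda_j^\epsilon$ $(j\ge3)$ implicitly by $\text{Res}(T^\epsilon(\lambda),\lambda_j^\epsilon)=0$, the governing Jacobian being the Gaudin matrix of the regular block, invertible by non-degeneracy, so that $\lambda_j^\epsilon=\lambda_j+O(\epsilon)$. Applying \eqref{unwanted} to the regularized roots, dividing by the norm $\|B(\lambda_1^\epsilon)\cdots B(\lambda_n^\epsilon)|0\rangle\|$ and letting $\epsilon\to0$: since the exact-string roots solve \eqref{eq:tq}, $T^\epsilon(\lambda)\to T(\lambda)$ is a polynomial, the $j\ge3$ terms drop, and the surviving part tends to $-c_1\,\Phi_1(\lambda)/(\lambda-i/2)-c_2\,\Phi_2(\lambda)/(\lambda+i/2)$, where $c_i=\lim\text{Res}(T^\epsilon,\lambda_i^\epsilon)/\|\cdots\|$, $\Phi_1(\lambda)=B(\lambda)B(-i/2)\prod_{j\ge3}B(\lambda_j)|0\rangle$ and $\Phi_2(\lambda)=B(\lambda)B(i/2)\prod_{j\ge3}B(\lambda_j)|0\rangle$. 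Since $\Phi_1(\lambda)$ and $\Phi_2(\lambda)$ have distinct root sets and are linearly independent for generic $\lambda$, the limit is an eigenvector with eigenvalue $T(\lambda)$ if and only if $c_1=c_2=0$, i.e.\ iff both residues vanish faster than the norm.

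The crux is then elementary. Expanding the residues gives $\text{Res}(T^\epsilon,\lambda_1^\epsilon)=A\,u^L+B\,\eta+\cdots$ and $\text{Res}(T^\epsilon,\lambda_2^\epsilon)=D\,v^L+C\,\eta+\cdots$, with $A=Q(3i/2)/(i\bar Q(i/2))$, $D=Q(-3i/2)/(-i\bar Q(-i/2))$, $B=-i^{L+1}\bar Q(-i/2)/(i\bar Q(i/2))$, $C=(-i)^{L+1}\bar Q(i/2)/(-i\bar Q(-i/2))$, all four nonzero because $\pm3i/2\notin\Lambda$ (Lemma~\ref{th:0}) and $\pm i/2$ are simple roots. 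Lemma~\ref{th:1ter} bounds the norm by $O(\max(|(u+v)/2|^L,|\eta|))$, and a case analysis of the scalings of $(u,v,\eta)$ shows that both residues can be $o(\text{norm})$ only in the balanced regime $u\sim v$, $\eta=O(\epsilon^L)$, norm of order $\epsilon^L$, where the two conditions become $A\kappa+B\gamma=0$ and $D\kappa+C\gamma=0$ for $\kappa=\lim u^L/\epsilon^L=\lim v^L/\epsilon^L$ and $\gamma=\lim\eta/\epsilon^L$. A nonzero solution $(\kappa,\gamma)$ exists iff $AC-BD=0$, and using $Q(\pm3i/2)=-2\bar Q(\pm3i/2)$ one checks that $AC-BD=0$ is exactly $(-1)^L=\prod_{\lambda_k\neq\pm i/2}\frac{\lambda_k+i/2}{\lambda_k-i/2}\cdot\frac{\lambda_k+3i/2}{\lambda_k-3i/2}$, namely \eqref{eq:genb}. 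This gives both directions at once: if \eqref{eq:genb} holds one picks $(u,v,\eta)$ along the kernel direction, and if some regularization works the surviving conditions force $AC-BD=0$.

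The hardest part I anticipate is the norm estimate for the full $n$-magnon state rather than the two-magnon block of Lemma~\ref{th:1ter}: for the ``if'' direction I need the norm to be of order exactly $\epsilon^L$, that is, that the fixed operator $\prod_{j\ge3}B(\lambda_j)$ does not annihilate the leading singular vector of $B(\lambda_1^\epsilon)B(\lambda_2^\epsilon)|0\rangle$ — a non-degeneracy input tied to the distinctness of the roots, and precisely what makes Lemma~\ref{th:1ter} indispensable. A secondary point, needed for ``only if'', is that the two surviving vectors $\Phi_1,\Phi_2$ are independent from every $j\ge3$ unwanted vector, so that the conditions on $\text{Res}(T^\epsilon,\lambda_{1,2}^\epsilon)$ are genuinely necessary; this holds because the $j\ge3$ vectors still contain the full string $B(\lambda_1^\epsilon)B(\lambda_2^\epsilon)$ and hence differ in their root content.
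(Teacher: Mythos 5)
Your proposal is correct in substance but takes a genuinely different route from the paper. The paper's own proof never passes through \eqref{eq:genb}: it constructs a perturbed second solution $P_\epsilon=Q_\epsilon U_\epsilon$ by the same digamma decomposition as in Lemma~\ref{th:1}, expresses the residues of $F_\epsilon(\lambda)=T_\epsilon(\lambda)/\bigl((\lambda+i/2)^L(\lambda-i/2)^L\bigr)$ through the quantities $a_k^+(\epsilon)+a_k^-(\epsilon)$ (the pole strengths of $P_\epsilon$), and shows these cannot vanish when $P$ keeps its pole at $-3i/2$, while if $P$ is a polynomial they vanish for the choice \eqref{eq:nep} --- thereby also exhibiting the explicit regularization of \cite{nepomechie}. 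You instead use the second half of Lemma~\ref{th:1} to replace ``$P$ is a polynomial'' by the scalar condition \eqref{eq:genb}, expand the two string residues to leading order, and reduce the existence of a regularization to the compatibility of a $2\times 2$ linear system. I checked your determinant claim: with $Q(\pm 3i/2)=-2\bar{Q}(\pm 3i/2)$ and the sign $(-i)^{L+1}/(-i^{L+1})=(-1)^{L}$, the condition $AC=BD$ is indeed exactly $(-1)^L=\prod_{k\ge 3}\frac{\lambda_k+i/2}{\lambda_k-i/2}\cdot\frac{\lambda_k+3i/2}{\lambda_k-3i/2}$, i.e.\ \eqref{eq:genb}; note also that your expansion is in fact exact, $\mathrm{Res}(T_\epsilon,\lambda_1^\epsilon)=A(\epsilon)u^L+B(\epsilon)\eta$ with $A(\epsilon)\to A$, $B(\epsilon)\to B$, which makes the scaling analysis clean. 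What each route buys: yours is more elementary (leading-order residue expansions plus linear algebra, no $P_\epsilon$ needed), while the paper's keeps the direct link between the residues of $F_\epsilon$ and the analytic structure of $P$ itself, which is the form that feeds into Theorem~\ref{th:4} and the generalizations in the companion paper.

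Two caveats. First, fixing the regular roots by the implicit function theorem with ``the Gaudin matrix of the regular block, invertible by non-degeneracy'' is an unproven assumption --- distinctness of the roots does not by itself give invertibility of that Jacobian --- and it is also unnecessary: you may simply keep $\lambda_j^\epsilon=\lambda_j$ for $j\ge 3$, because the corresponding unwanted vectors in \eqref{unwanted} still contain the near-string pair $B(\lambda_1^\epsilon)B(\lambda_2^\epsilon)$ and are therefore themselves $O(\epsilon^L)$ in the balanced regime, while their residues are only $O(\epsilon)$, so these terms are automatically negligible against the norm. Second, the point you flag as hardest --- a lower bound of order $\epsilon^L$ on the norm of the full $n$-magnon state, i.e.\ that $\prod_{j\ge 3}B(\lambda_j)$ does not annihilate the leading vector of the two-magnon block --- is a real gap, but it is one the paper shares: the paper asserts that Lemma~\ref{th:1ter} yields the reduction to condition \eqref{eq:condition} without proving the $n$-magnon extension either, so on this point your argument sits at the same level of rigor as the published proof.
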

\begin{proof}

We denote $T_\epsilon(\lambda)$ the function defined by \eqref{eq:tq} in terms of $Q_\epsilon(\lambda)=\prod_j(\lambda-\lambda_j^\epsilon)$:
\begin{equation}
\label{tesp}
T_\epsilon(\lambda)=\frac{Q_\epsilon(\lambda+i)(\lambda-i/2)^L+Q_\epsilon(\lambda-i)(\lambda+i/2)^L}{Q_\epsilon(\lambda)} \,.
\end{equation}
Let us first assume that the regularization is such that $(\lambda_{1,2}^\epsilon\mp i/2)^L=o(\lambda_1^\epsilon-\lambda_2^\epsilon-i)$ for $\lambda_1=i/2$ and $\lambda_2=-i/2$. Then according to Lemma \ref{th:1ter}, we need $\text{Res}(T_\epsilon(\lambda),\lambda_j^\epsilon)$ to be $o(\lambda_1^\epsilon-\lambda_2^\epsilon-i)$. However, from \eqref{tesp} and because of $(\lambda_{1,2}^\epsilon\mp i/2)^L=o(\lambda_1^\epsilon-\lambda_2^\epsilon-i)$ we see that
\begin{equation}
\text{Res}(T_\epsilon(\lambda),\lambda_i^\epsilon)=\frac{i^L Q^*(-i/2)}{Q^*(i/2)}(\lambda_1^\epsilon-\lambda_2^\epsilon-i)+o(\lambda_1^\epsilon-\lambda_2^\epsilon-i) \,,
\end{equation}
which is of the same order as $\lambda_1^\epsilon-\lambda_2^\epsilon-i$ and the residue terms do not vanish faster than the Bethe state. Hence, in any case we need $\lambda_1^\epsilon-\lambda_2^\epsilon-i=O((\lambda_{1,2}^\epsilon\mp i/2)^L)$ to find such a regularization. We will suppose this condition satisfied from now on.\\

We will denote
\begin{equation}
F(\lambda)=\frac{T(\lambda)}{(\lambda+i/2)^L(\lambda-i/2)^L}
\end{equation}
with $F_\epsilon(\lambda)$ its perturbed version, involving $T_\epsilon(\lambda)$.
 Lemma \ref{th:1ter} implies that the condition for the residue terms in \eqref{unwanted} to vanish faster than the Bethe state, in the limit $\epsilon\to 0$, is
\begin{equation}
\label{eq:condition}
 \underset{\epsilon\to 0}{\lim}\, \text{Res}\, (F_\epsilon(\lambda),\lambda_i^\epsilon)=0 \,.
\end{equation}

Let us first build a $P_\epsilon$ corresponding to the $Q_\epsilon$. Decomposing
\begin{equation}
\frac{\lambda^L}{Q_\epsilon(\lambda+i/2)Q_\epsilon(\lambda-i/2)}=\sum_{k}\frac{a_k^+(\epsilon)}{\lambda-(\lambda^\epsilon_k+i/2)}+\frac{a_k^-(\epsilon)}{\lambda-(\lambda^\epsilon_k-i/2)} \,,
\end{equation}
one can write
\begin{equation}
\frac{\lambda^L}{Q_\epsilon(\lambda+i/2)Q_\epsilon(\lambda-i/2)}=U_\epsilon(\lambda+i/2)-U_\epsilon(\lambda-i/2)
\end{equation}
with
\begin{equation}
U_\epsilon(\lambda)=\sum_k -ia_k^+(\epsilon)\psi(-i(\lambda-(\lambda_k^\epsilon+ i/2))+1/2)-ia_k^-(\epsilon)\psi(-i(\lambda-(\lambda_k^\epsilon- i/2))+1/2) \,,
\end{equation}
and so
\begin{equation}
\label{eq:pqeps}
P_\epsilon(\lambda+i/2)Q_\epsilon(\lambda-i/2)-P_\epsilon(\lambda-i/2)Q_\epsilon(\lambda+i/2)=\lambda^L
\end{equation}
with
\begin{equation}
P_\epsilon(\lambda)=Q_\epsilon(\lambda)U_\epsilon(\lambda) \,,
\end{equation}
which has poles at $\lambda_k^\epsilon-ni$ with $n$ a strictly positive integer, with residue $-(a_k^+(\epsilon)+a_k^-(\epsilon))Q_\epsilon(\lambda_k^\epsilon-ni)$.

With relation \eqref{eq:pqeps}, one has
\begin{equation}
F_\epsilon(\lambda)=\frac{P_\epsilon(\lambda+i)Q_\epsilon(\lambda-i)-P_\epsilon(\lambda-i)Q_\epsilon(\lambda+i)}{(\lambda+i/2)^L(\lambda-i/2)^L} \,,
\end{equation}
which has a pole at every $\lambda_k^\epsilon$ with residue
\begin{equation}
r_k(\epsilon)=\frac{(a^+_k(\epsilon)+a^-_k(\epsilon))Q_\epsilon(\lambda_k^\epsilon-i)Q_\epsilon(\lambda_k^\epsilon+i)}{(\lambda_k^\epsilon+i/2)^L(\lambda_k^\epsilon-i/2)^L} \,.
\end{equation}
We now pick a $k$ that corresponds to $i/2$ or $-i/2$, for example without loss of generality $\lambda_k=i/2$. The quantity $(a^+_k(\epsilon)+a^-_k(\epsilon))Q_\epsilon(\lambda_k^\epsilon-i)/(\lambda_k^\epsilon-i/2)^L$ is undetermined when $\epsilon\to 0$. With relation \eqref{eq:pqeps} at $\lambda_k^\epsilon-i/2$, one gets
\begin{equation}
P_\epsilon(\lambda_k^\epsilon)Q_\epsilon(\lambda_k^\epsilon-i)+(a_k^+(\epsilon)+a_k^-(\epsilon))Q_\epsilon(\lambda_k^\epsilon-i)Q_\epsilon^*(\lambda_k^\epsilon)=(\lambda_k^\epsilon-i/2)^L \,,
\end{equation}
whence
\begin{equation}
\label{eq:eq}
\frac{(a^+_k(\epsilon)+a^-_k(\epsilon))Q_\epsilon(\lambda_k^\epsilon-i)}{(\lambda_k^\epsilon-i/2)^L}=\frac{1}{Q^*_\epsilon(\lambda^\epsilon_k)}\left(1-\frac{P_\epsilon(\lambda_k^\epsilon)Q_\epsilon(\lambda^\epsilon_k-i)}{(\lambda^\epsilon_k-i/2)^L}\right) \,.
\end{equation}
The left-hand side vanishes if and only if $a^+_k(\epsilon)+a^-_k(\epsilon)$ vanishes. Indeed, if the left-hand side vanishes, then $\tfrac{Q_\epsilon(\lambda_k^\epsilon-i)}{(\lambda_k^\epsilon-i/2)^L}$ cannot vanish on the right hand-side. If $a^+_k(\epsilon)+a^-_k(\epsilon)$ vanishes, then $\tfrac{Q_\epsilon(\lambda_k^\epsilon-i)}{(\lambda_k^\epsilon-i/2)^L}$ cannot diverge when $\epsilon\to 0$, otherwise the right-hand side would diverge faster since $P(i/2)\neq 0$, see \eqref{eq:pi/2}; and so the whole left-hand side must vanish.

If $P$ is not a polynomial, according to Lemma \ref{th:1} it must have a pole at $-3i/2$, so that $a^+_k(\epsilon)+a^-_k(\epsilon)$ does not vanish when $\epsilon\to 0$, at least for one $k$ such that $\lambda_k=i/2$ or $\lambda_k=-i/2$ (we can assume that it is true for $i/2$; otherwise we could have chosen $-i/2$ before). Hence the left-hand side of \eqref{eq:eq} does not vanish and we cannot have $r_k(\epsilon)\to 0$ when $\epsilon\to 0$.

If $P$ is a polynomial, for an arbitrary function $\epsilon\mapsto \lambda_j^\epsilon$, the different poles $a^+_k(\epsilon)+a^-_k(\epsilon)$ do not necessarily vanish individually in the limit $\epsilon\to 0$, since they can compensate each other (for example, $1/(\lambda-\epsilon)-1/(\lambda+\epsilon)$ does not have any poles in the limit $\epsilon\to 0$, even if the residues at $\epsilon\neq 0$ do not vanish in the limit $\epsilon\to 0$). Coming back to \eqref{eq:pqeps} evaluated at $\lambda=\lambda^\epsilon_k-i/2$ for $\lambda_k=-i/2$ and for  $\lambda_k=i/2$ , one sees that the vanishing of the residues is equivalent to
\begin{equation}
\label{eq:nep}
Q(-3i/2)=\frac{(-i)^L}{P(-i/2)}\,,\qquad Q_\epsilon(\lambda_k^\epsilon-i)\sim \frac{(\lambda_k^\epsilon-i/2)^L}{P(i/2)}\,\quad \text{for }\lambda_k=i/2 \,.
\end{equation}
The first condition is always satisfied when $P$ is a polynomial, and the second one is an additional condition that has to be satisfied for the Bethe vector to be an eigenvector in the limit $\epsilon\to 0$. This shows that if $P$ is a polynomial, then the poles $r_k(\epsilon)$ can vanish in the limit $\epsilon\to 0$ with an appropriate choice of roots $\lambda_k^\epsilon$.

\end{proof}

We remark that the second condition in \eqref{eq:nep}, writing the perturbed roots as $i/2+\epsilon$ and $-i/2+\eta(\epsilon)$, can be translated into
\begin{equation}
\eta(\epsilon)=\epsilon+\frac{\epsilon^L Q(3i/2)}{i^L Q^*(-i/2)}+o(\epsilon^L) \,,
\end{equation}
which was the regularization found in \cite{nepomechie}.

\subsection{An additional $TQ$ relation}

We can now prove the

\begin{theorem}
\label{th:4}
$Q(\lambda)=\prod_{i=1}^n (\lambda-\lambda_i)$ is a physical solution to the Bethe ansatz equations if and only if the functions $T_0(\lambda)$ and $T_1(\lambda)$ in the following two $TQ$ relations are polynomials:
\begin{equation}
\label{eq:t3}
\begin{aligned}
&T_0(\lambda)Q(\lambda)=W_0(\lambda-i/2) Q(\lambda+i)+W_0(\lambda+i/2) Q(\lambda-i) \,, \\
&T_1(\lambda)Q'(\lambda)=W_1(\lambda-i/2) Q'(\lambda+i)+W_1(\lambda+i/2) Q'(\lambda-i) \,,
\end{aligned}
\end{equation}
where
\begin{equation}
\begin{aligned}
Q'(\lambda)=&\ Q(\lambda+i/2)-Q(\lambda-i/2) \,, \\
W_0(\lambda)=& \ \lambda^L \,, \\
W_{1}(\lambda)=& \ W_0(\lambda+i/2)+W_0(\lambda-i/2)-T_0(\lambda) \,.
\end{aligned}
\end{equation}
\end{theorem}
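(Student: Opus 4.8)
The plan is to reduce the statement to Lemma \ref{th:2} (physicality $\Leftrightarrow$ $P$ polynomial) together with the observation that the first relation in \eqref{eq:t3} is nothing but the ordinary $TQ$ relation \eqref{eq:tq}: since $W_0(\lambda)=\lambda^L$, we have $T_0=T$, so polynomiality of $T_0$ is exactly the standard zero-remainder condition and is equivalent to $\{\lambda_i\}$ being a solution of the Bethe equations \eqref{eq:bexxz} with distinct roots. Moreover, as soon as $T_0$ is a polynomial, so is $W_1(\lambda)=(\lambda+i/2)^L+(\lambda-i/2)^L-T_0(\lambda)$. It then remains to prove that, assuming $T_0$ polynomial, $T_1$ is a polynomial if and only if $P$ is; by Lemmas \ref{th:1}--\ref{th:2} the latter is equivalent to $\alpha_0=0$ and to physicality, which closes both directions (a non-physical solution already fails at $T_0$ if it is not even a Bethe solution, and otherwise fails at $T_1$).

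First I would identify the second relation as the \emph{descendant} Baxter equation obtained from the first under the discrete-derivative substitution $Y(\lambda)\mapsto Y(\lambda+i/2)-Y(\lambda-i/2)$. Here $Q$ and the function $P$ of Lemma \ref{th:1} are the two solutions of $T_0(\lambda)Y(\lambda)=(\lambda-i/2)^LY(\lambda+i)+(\lambda+i/2)^LY(\lambda-i)$; that $P$ solves it too follows from \eqref{eq:pq} by a one-line manipulation. Setting $Q'(\lambda)=Q(\lambda+i/2)-Q(\lambda-i/2)$ and $P'(\lambda)=P(\lambda+i/2)-P(\lambda-i/2)$, I would substitute the first Baxter equation at $\lambda\pm i/2$ to eliminate the $\pm 3i/2$ shifts in the combination $W_1(\lambda-i/2)Y'(\lambda+i)+W_1(\lambda+i/2)Y'(\lambda-i)-T_1(\lambda)Y'(\lambda)$, then collect the coefficients of $Y(\lambda+i/2)$ and $Y(\lambda-i/2)$; demanding that both vanish for the two independent solutions forces precisely the $W_1$ stated in the theorem and pins down $T_1$. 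Hence $Q'$ and $P'$ both solve the second relation with this common $T_1$.

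The heart of the argument is that this descendant equation has exactly the same structure as the original. A short computation, substituting shifts of \eqref{eq:pq} and using that $Q$ and $P$ both solve the first Baxter equation, yields the descendant Wronskian $P'(\lambda+i/2)Q'(\lambda-i/2)-P'(\lambda-i/2)Q'(\lambda+i/2)=W_1(\lambda)$, and therefore, exactly as in the derivation of \eqref{eq:tpq}, the closed form $T_1(\lambda)=P'(\lambda+i)Q'(\lambda-i)-P'(\lambda-i)Q'(\lambda+i)$. If $P$ is a polynomial then so are $P'$ and $T_1$. For the converse I would use the explicit form \eqref{eq:pform}: when $\alpha_0\neq 0$, the identity $\psi(x+1)-\psi(x)=1/x$ together with $Q(\pm i/2)=0$ gives $P'(\lambda)=(\text{polynomial})+\alpha_0\,Q'(\lambda)\,\psi(-i\lambda)$, so in the closed form for $T_1$ the two digamma contributions telescope to the rational function $\tfrac{i}{\lambda}-\tfrac{1}{i\lambda+1}$, leaving $T_1(\lambda)=(\text{polynomial})+\alpha_0\,Q'(\lambda+i)Q'(\lambda-i)\big(\tfrac{i}{\lambda}-\tfrac{1}{i\lambda+1}\big)$.

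I expect the main obstacle to be the descendant-Wronskian step: identifying the correct $W_1$ and proving the relation $P'(\lambda+i/2)Q'(\lambda-i/2)-P'(\lambda-i/2)Q'(\lambda+i/2)=W_1(\lambda)$ cleanly, as it is what guarantees that the second relation is again of Pronko type and that the pole analysis genuinely detects $\alpha_0$. The remaining subtlety is the final pole accounting: the would-be pole of the telescoped factor at $\lambda=i$ is harmless because $Q'(0)=0$ cancels it, whereas the simple pole at $\lambda=0$ has residue proportional to $\alpha_0\,Q(3i/2)Q(-3i/2)$, which is nonzero since $Q(\pm 3i/2)\neq 0$ (as $S=\{0\}$ by Lemma \ref{th:0}, cf.\ \eqref{eq:pi/2} and \eqref{eq:nep}). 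This uncancellable pole shows $T_1$ fails to be a polynomial exactly when $\alpha_0\neq 0$, so that $T_1$ polynomial $\Leftrightarrow$ $\alpha_0=0$ $\Leftrightarrow$ physicality, completing the equivalence.
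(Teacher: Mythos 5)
Your proposal is correct and follows essentially the same route as the paper's proof: establish the Wronskian identity $W_1(\lambda)=Q'(\lambda-i/2)P'(\lambda+i/2)-Q'(\lambda+i/2)P'(\lambda-i/2)$, deduce the closed form $T_1(\lambda)=P'(\lambda+i)Q'(\lambda-i)-P'(\lambda-i)Q'(\lambda+i)$, and detect $\alpha_0\neq 0$ through the uncancellable simple pole of $T_1$ at $\lambda=0$ with residue proportional to $\alpha_0\,Q(3i/2)Q(-3i/2)\neq 0$ (by Lemma \ref{th:0}), before concluding with Lemma \ref{th:2}. Your explicit telescoping of the digamma terms into $\tfrac{i}{\lambda}-\tfrac{1}{i\lambda+1}$ and the check that the would-be pole at $\lambda=i$ is killed by $Q'(0)=0$ merely spell out steps the paper leaves implicit.
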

\begin{proof}
It is straightforward to show that
\begin{equation}
\label{eq:w}
W_1(\lambda)=Q'(\lambda-i/2)P'(\lambda+i/2)-Q'(\lambda+i/2)P'(\lambda-i/2)
\end{equation}
where $P'(\lambda)=P(\lambda+i/2)-P(\lambda-i/2)$ with $P(\lambda)$ the function introduced in Lemma \ref{th:1}, using equation \eqref{eq:pq}. Then
\begin{equation}
T_1(\lambda)=P'(\lambda+i)Q'(\lambda-i)-P'(\lambda-i)Q'(\lambda+i) \,.
\end{equation}
Now, from the general form of $P$ in Lemma \ref{th:1}, one has
\begin{equation}
P'(\lambda)=A(\lambda)+Q'(\lambda)\alpha_0 \psi (-i\lambda)
\end{equation}
with $A(\lambda)$ a rational function with a unique simple pole at $0$ with residue proportional to $\alpha_0$, using $\psi(x+1)-\psi(x)=1/x$. Since $\psi$ has a pole at $-1$, $T_1$ has a priori a pole at $0$ with residue $i\alpha_0 Q'(-i)Q'(i)$. From Lemma \ref{th:0}, $\pm i$ are never center of strings and so $Q'(\pm i)=\pm Q(\pm 3i/2)\neq 0$ if $\alpha_0\neq 0$. It follows that $T_1$ is a polynomial if and only if $P$ is. Then Lemma \ref{th:2} concludes the proof.
\end{proof}

\subsection{The algorithm of Marboe and Volin}

Let us now come back to the algorithm of Marboe and Volin \cite{marboe}. It consists in introducing functions $Q_{a,s}$ with $s=0,\ldots,L-K$, for $a=0,1,2$ if $s\leq K$ and $a=0,1$ if $s>K$, satisfying the following $QQ$ relations
\begin{equation}
\label{eq:QQrel}
Q_{a+1,s}(\lambda)Q_{a,s+1}(\lambda)\propto Q_{a+1,s+1}(\lambda+i/2)Q_{a,s}(\lambda-i/2)- Q_{a+1,s+1}(\lambda-i/2)Q_{a,s}(\lambda+i/2)
\end{equation}
with the boundary conditions $Q_{0,0}(\lambda)=\lambda^L$, $Q_{2,s}=1$ for $s\leq K$, $Q_{1,s}=1$ for $s>K$, and imposing that all the $Q_{a,s}$ are polynomials. The $Q(\lambda)$ is then given by $Q_{1,0}(\lambda)$.

The labels $(a,s)$ can be interpreted as the coordinates of corners of boxes in an associated two-row Young diagram
\begin{equation}
\begin{tikzpicture}
 \draw (0,0)--(8,0);
 \draw (0,1)--(8,1);
 \draw (0,2)--(4,2);
 \draw (0,0)--(0,2);
 \draw (1,0)--(1,2);
 \draw (2,0)--(2,2);
 \draw (3,0)--(3,2);
 \draw (4,0)--(4,2);
 \draw (5,0)--(5,1);
 \draw (6,0)--(6,1);
 \draw (7,0)--(7,1);
 \draw (8,0)--(8,1);
 \draw (0,0) node[below] {0};
 \draw (1,0) node[below] {1};
 \draw (4,0) node[below] {$K$};
 \draw (8,0) node[below] {$L-K$};
 \draw (0,0) node[left] {0};
 \draw (0,1) node[left] {1};
 \draw (0,2) node[left] {2};
 \tikzset{>=latex}
 \draw[->] (-0.7,0)--(-0.7,1);
 \draw (-0.7,0.5) node[left] {a};
 \draw[->] (0,-0.7)--(1,-0.7);
 \draw (0.5,-0.7) node[below] {s};
 \draw (2.5,0.5) node {$\cdots$};
 \draw (2.5,1.5) node {$\cdots$};
 \draw (6.5,0.5) node {$\cdots$};
 \end{tikzpicture} \nonumber
\end{equation}
Each $QQ$ relation \eqref{eq:QQrel} then imposes a constraint on the four $Q$-functions associated with the corners of the box whose lower left corner is $(a,s)$. The boundary conditions fix in particular $Q_{a,s} = 1$ for all corners along the top of the diagram.

In this context, it is readily checked that the two $TQ$ relations \eqref{eq:t3} are exactly the relations obtained when $Q_{a,s}$ are imposed to be polynomials for $a=0,1,2$ and $s=0,1,2$. In other words, they are the zero-remainder conditions associated with the two boxes in the leftmost column of the Young diagram. (Note that in the special case $K=1$ there cannot be strings and the second equation of \eqref{eq:t3} is trivially satisfied.) Thus, according to Theorem \ref{th:4}, all the other polynomials $Q_{a,s}$ for $s>2$, as well as the corresponding relations \eqref{eq:QQrel} fixing them, are actually superfluous, as conjectured in \cite{marboe}.

We also remark that the fact that only one additional $TQ$ relation is needed to discard the unphysical solutions is linked to the fact that there is only one possible exact string (otherwise this $TQ$ relation would only give one equation relating the $\alpha_s$'s).\\

\subsection{An example}
Let us illustrate Theorem \ref{th:4} with sizes $L=4$ and $L=5$. In both cases the polynomial $Q(\lambda)=(\lambda+i/2)(\lambda-i/2)=\lambda^2+\tfrac{1}{4}$ is a solution to the first $TQ$ relation with
\begin{equation}
T_0(\lambda) = \begin{cases}
-\tfrac{3}{8}+3\lambda^2+2\lambda^4\,, & \text{if }L=4 \,, \\
-\tfrac{11}{8}\lambda+3\lambda^3+2\lambda^5\,, & \text{if }L=5 \,.
\end{cases}
\end{equation}
However, the corresponding $T_1(\lambda)$ reads
\begin{equation}
T_1(\lambda) = \begin{cases}
-4(2+3\lambda^2)\,,& \text{if }L=4 \,,\\
 \frac{4}{\lambda}-8\lambda-16\lambda^3\,, & \text{if }L=5 \,,
\end{cases}
\end{equation}
showing that the polynomiality of the solution to the second $TQ$ relation is satisfied for $L=4$, but not for $L=5$. Besides, the function $P(\lambda)$ of Lemma \ref{th:1} reads
\begin{equation}
P(\lambda) = \begin{cases}
-i\lambda\left(\lambda^2+\tfrac{5}{4}\right)\,, & \text{if }L=4 \,, \\
\frac{1}{2i}\lambda^2 \left( \lambda^2+\tfrac{1}{4}\right)+\tfrac{i}{2}+\left(\lambda^2+\tfrac{1}{4}\right)i\psi(-i\lambda+1/2)\,, & \text{if }L=5
\end{cases}
\end{equation}
and is a polynomial if and only if $T_1(\lambda)$ is a polynomial. It turns out that $T_0(\lambda)$ is indeed an eigenvalue of the transfer matrix for $L=4$, but not for $L=5$, in agreement with the theorem.

In Figure \ref{fig:roots} we plot the roots of all the polynomials $Q(\lambda)$ solution to the $TQ$ relation \eqref{eq:tq} in size $L=6$, showing in blue those whose solve the two $TQ$ relations \eqref{eq:t3} and in red those that only solve the first one. Only the solutions $(-i/2,i/2)$ $(-i/2,0,i/2)$ among the blue ones involve exact strings in Figure \ref{fig:roots} (all the red non-physical solutions must exhibit exact strings). 

We see in this example that the number of admissible solutions with $K$ roots is ${L\choose K}-{L\choose K-1}$. We recall that in the Heisenberg spin chain the Bethe states are necessarily highest-weight states with respect to the underlying $su(2)$ algebra. Taking into account that the eigenvalue corresponding to a solution with $K$ Bethe roots is $(L-2K+1)$-fold degenerate, one obtains $2^6$ eigenstates indeed.

\begin{figure}[H]
 \begin{center}
\includegraphics[width=0.24\columnwidth]{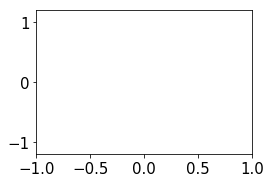} 
\includegraphics[width=0.24\columnwidth]{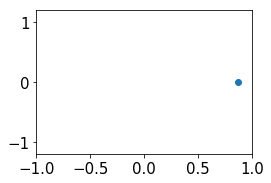} 
\includegraphics[width=0.24\columnwidth]{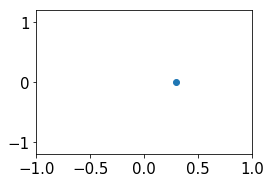} 
\includegraphics[width=0.24\columnwidth]{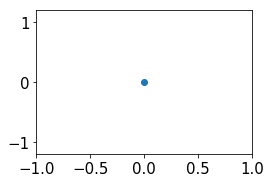} 
\includegraphics[width=0.24\columnwidth]{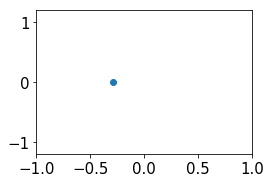}
\includegraphics[width=0.24\columnwidth]{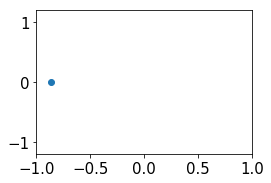}
\includegraphics[width=0.24\columnwidth]{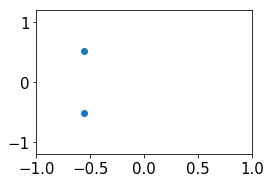}
\includegraphics[width=0.24\columnwidth]{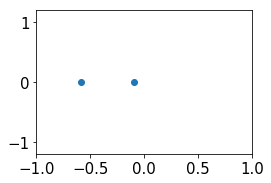}
\includegraphics[width=0.24\columnwidth]{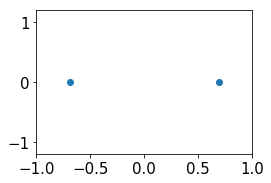}
\includegraphics[width=0.24\columnwidth]{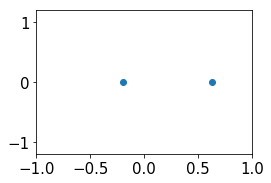}
\includegraphics[width=0.24\columnwidth]{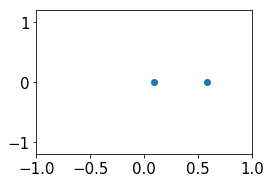}
\includegraphics[width=0.24\columnwidth]{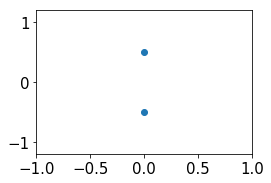}
\includegraphics[width=0.24\columnwidth]{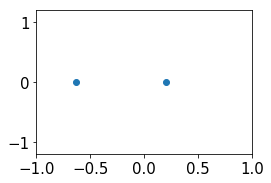}
\includegraphics[width=0.24\columnwidth]{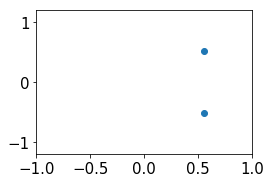}
\includegraphics[width=0.24\columnwidth]{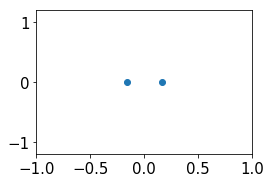}
\includegraphics[width=0.24\columnwidth]{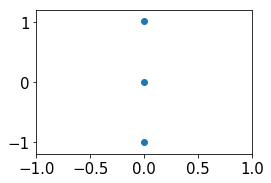}
\includegraphics[width=0.24\columnwidth]{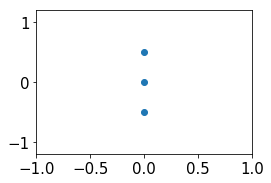}
\includegraphics[width=0.24\columnwidth]{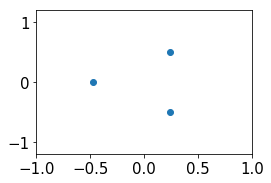}
\includegraphics[width=0.24\columnwidth]{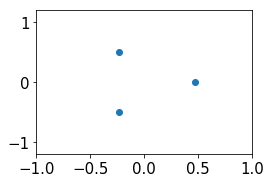}
\includegraphics[width=0.24\columnwidth]{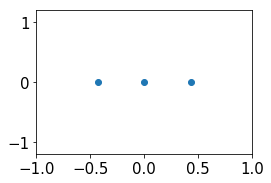}
\includegraphics[width=0.24\columnwidth]{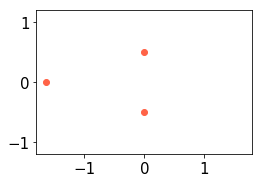}
\includegraphics[width=0.24\columnwidth]{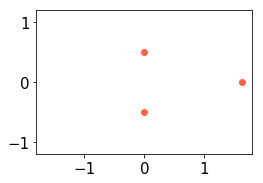}
\includegraphics[width=0.24\columnwidth]{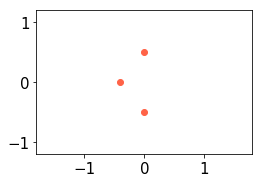}
\includegraphics[width=0.24\columnwidth]{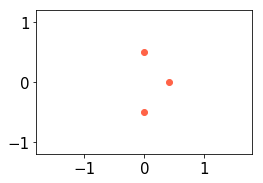}
\end{center}
\caption{In blue: the roots of all the solutions $Q(\lambda)$ to the two $TQ$ relations \eqref{eq:t3} in size $L=6$. In red: the roots of the solutions to the first $TQ$ relation in \eqref{eq:t3} that are not solution to the second one, and thus that do not contribute to the spectrum.}
 \label{fig:roots}
\end{figure}

\subsection*{Acknowledgements}
We are very grateful to R.I.~Nepomechie for his careful reading of the paper.
This work was supported by the ERC Advanced Grant NuQFT and by the EPSRC under grant EP/S020527/1.

\bibliography{expansion}

\begin{thebibliography}{10}

\bibitem{bethe}
H.~Bethe, ``Zur {T}heorie der {M}etalle,'' {\em Z. Physik}, vol.~71, p.~205,
  1931.

\bibitem{slv}
N.~A. Slavnov, ``Algebraic {B}ethe ansatz,'' {\em arXiv:1804.07350}, 2018.

\bibitem{esslerfine}
F.~H. Essler, V.~E. Korepin, and K.~Schoutens, ``Fine structure of the {B}ethe
  ansatz for the spin $1/2$ {H}eisenberg {XXX} model,'' {\em J. Phys. A},
  vol.~25, p.~4115, 1992.

\bibitem{noh}
J.~D. Noh, D.-S. Lee, and D.~Kim, ``Origin of the singular {B}ethe ansatz
  solutions for the {H}eisenberg {XXZ} spin chain,'' {\em Physica A}, vol.~287,
  p.~167, 2000.

\bibitem{hagemans}
R.~Hagemans and J.-S. Caux, ``Deformed strings in the {H}eisenberg model,''
  {\em J. Phys. A}, vol.~40, p.~14605, 2007.

\bibitem{fabricius}
K.~Fabricius and B.~M. MacCoy, ``Bethe's equation is incomplete for the {XXZ}
  model at roots of unity,'' {\em J.Statist.Phys.}, vol.~103, p.~647, 2001.

\bibitem{bax}
R.~J. Baxter, ``Completeness of the {B}ethe ansatz for the six and eight-vertex
  models,'' {\em J.Statist.Phys.}, vol.~108, p.~1, 2002.

\bibitem{goetze}
W.~Goetze, U.~Karahasanovic, and F.~Essler, ``Low-temperature dynamical
  structure factor of the two-leg spin-$1/2$ {H}eisenberg ladder,'' {\em Phys.
  Rev. B}, vol.~82, p.~104417, 2010.

\bibitem{bazhanov}
V.~V. Bazhanov, T.~Lukowski, C.~Meneghelli, and M.~Staudacher, ``A shortcut to
  the {Q}-operator,'' {\em J. Stat. Mech.}, vol.~1011, p.~P11002, 2010.

\bibitem{kirillov}
A.~N. Kirillov, ``Combinatorial identities, and completeness of eigenstates for
  the {H}eisenberg magnet,'' {\em J. Soviet Math.}, vol.~30, p.~2298, 1985.

\bibitem{kirillov2}
A.~N. Kirillov and N.~A. Liskova, ``Completeness of {B}ethe's states for the
  generalized {XXZ} model,'' {\em J. Phys. A}, vol.~30, p.~1209, 1997.

\bibitem{klumperzittartz}
A.~Kl\"umper and J.~Zittartz, ``The eight-vertex model: spectrum of the
  transfer matrix and classification of the excited states,'' {\em Z. Phys. B -
  Condensed matter}, vol.~75, p.~371, 1989.

\bibitem{esslercomplete}
F.~H.~L. Essler, V.~E. Korepin, and K.~Schoutens, ``Complete solution of the
  one-dimensional {H}ubbard model,'' {\em Phys. Rev. Lett.}, vol.~67, p.~3848,
  1991.

\bibitem{juettner}
G.~Juettner and M.~Karowski, ``Completeness of 'good' {B}ethe ansatz solutions
  of a quantum group invariant {H}eisenberg model,'' {\em Nucl. Phys. B},
  vol.~430, p.~615, 1994.

\bibitem{tarasov3}
V.~Tarasov and A.~Varchenko, ``Bases of {B}ethe vectors and difference
  equations with regular singular points,'' {\em Kyoto-Math 95-04}, 1995.

\bibitem{wortis}
M.~Wortis, ``Bound states of two spin waves in the {H}eisenberg ferromagnet,''
  {\em Phys. Rev.}, vol.~132, p.~85, 1963.

\bibitem{ganahl}
M.~Ganahl, E.~Rabel, F.~H.~L. Essler, and H.~G. Evertz, ``Observation of
  complex bound states in the spin-1/2 {H}eisenberg {XXZ} chain using local
  quantum quenches,'' {\em Phys. Rev. Lett.}, vol.~108, p.~077206, 2012.

\bibitem{symmetrybreaking}
E.~Granet, J.~L. Jacobsen, and H.~Saleur, ``Spontaneous symmetry breaking in
  {2D} supersphere sigma models and applications to intersecting loop soups,''
  {\em J. Phys. A: Math. Theor.}, vol.~52, p.~345001, 2019.

\bibitem{arutyunov}
G.~Arutyunov, S.~Frolov, and A.~Sfondrini, ``Exceptional operators in {N=4}
  super {Y}ang-{M}ills,'' {\em JHEP}, vol.~09, p.~006, 2012.

\bibitem{Baxter8Vmodel1972}
R.~J. Baxter, ``Partition function of the {E}ight-{V}ertex lattice model,''
  {\em Ann. Phys.}, vol.~70, p.~193, 1972.

\bibitem{BaxterBook}
R.~J. Baxter, {\em Exactly solved models in statistical mechanics}.
\newblock Academic Press, London, 1982.

\bibitem{Siddharthan}
R.~Siddharthan, ``Singularities in the {B}ethe solution of the {XXX} and {XXZ}
  {H}eisenberg spin chains,'' {\em arXiv:cond-mat/9804210}, 1998.

\bibitem{nepomechie}
R.~I. Nepomechie and C.~Wang, ``Algebraic {B}ethe ansatz for singular
  solutions,'' {\em J. Phys. A}, vol.~46, p.~325002, 2013.

\bibitem{avdeev}
L.~Avdveev and A.~Vladimirov, ``Exceptional solutions of the {B}ethe ansatz
  equations,'' {\em Theor. Math. Phys.}, vol.~69, p.~1071, 1987.

\bibitem{volin}
D.~Volin, ``String hypothesis for $gl(m|n)$ spin chains: a particle/hole
  democracy,'' {\em Lett. Math. Phys.}, vol.~102, p.~1, 2012.

\bibitem{nepomechietwist}
R.~I. Nepomechie and C.~Wang, ``Twisting singular solutions of {B}ethe's
  equations,'' {\em J. Phys. A}, vol.~47, p.~505004, 2014.

\bibitem{marboe}
C.~Marboe and D.~Volin, ``Fast analytic solver of rational {B}ethe equations,''
  {\em J. Phys. A: Math. Theor.}, vol.~50, p.~204002, 2017.

\bibitem{mukhin}
E.~Mukhin, V.~Tarasov, and A.~Varchenko, ``Bethe algebra of homogeneous {XXX}
  {H}eisenberg model has simple spectrum,'' {\em Commun. Math. Phys.},
  vol.~288, p.~1, 2009.

\bibitem{tarasavo2}
V.~Tarasov and A.~Varchenko, ``Completeness of {B}ethe vectors and difference
  equations with regular singular points,'' {\em Int. Math. Res. Notices},
  p.~637, 1995.

\bibitem{pronko}
G.~P. Pronko and Y.~G. Stroganov, ``Bethe equations 'on the wrong side of the
  equator','' {\em J. Phys. A: Math. Gen.}, vol.~32, p.~2333, 1999.

\bibitem{otherarticle}
Z.~Bajnok, E.~Granet, J.~L. Jacobsen, and R.~I. Nepomechie, ``On generalized
  {Q}-systems,'' 2019.

\bibitem{neposommese}
W.~Hao, R.~I. Nepomechie, and A.~J. Sommese, ``On the completeness of solutions
  to {B}ethe's equations,'' {\em Phys. Rev. E}, vol.~88, p.~052113, 2013.

\bibitem{faddeevtakh}
L.~D. Faddeev and L.~Takhtadzhyan, ``Spectrum and scattering of excitation in
  the one dimensional {H}eisenberg model,'' {\em Zap.Nauch.Sem.LOMI}, vol.~109,
  p.~134, 1981.

\bibitem{gromovnewconstruction}
N.~Gromov, F.~Levkovich-Maslyuk, and G.~Sizov, ``New construction of
  eigenstates and separation of variables for {SU(N)} quantum spin chains,''
  {\em JHEP}, vol.~09, p.~111, 2017.

\end{thebibliography}

\bibliographystyle{ieeetr}

\end{document}